\documentclass[12pt, a4paper]{article}
\usepackage[utf8]{inputenc}
\usepackage{dcolumn,lscape}%,setspace}
\usepackage{amsmath,longtable,multicol,dcolumn,tabularx,graphicx,amssymb}
\usepackage{exscale,amsthm, multirow, rotating}
\usepackage{natbib}
\usepackage{bbm}
\usepackage{afterpage, placeins}
\usepackage{tikz,dsfont,float}
\usepackage[caption = false]{subfig}
\textwidth 17.cm \textheight 23.6cm \topmargin -.77in
\evensidemargin 0.cm \oddsidemargin 0.cm
\parskip 0.5ex plus 0.01ex minus 0.01ex
\parindent 1.0cm

\setlength{\unitlength}{1cm}

\newcommand{\xvec}{\boldsymbol}
\newcommand{\xmat}{\mathbf}

\newtheorem{proposition}{Proposition}

\usetikzlibrary{calc}
\usetikzlibrary{positioning}
\bibliographystyle{apalike}
\begin{document}

\def\spacingset#1{\renewcommand{\baselinestretch}%
{#1}\small\normalsize} \spacingset{1}

%%%%%%%%%%%%%%%%%%%%%%%%%%%%%%%%%%%%%%%%%%%%%%%%%%%%%%%%%%%%%%%%%%%%%%%%%%%%%%

\title{\bf Estimation of Spatial and Temporal Autoregressive Effects using LASSO -- An Example of Hourly Particulate Matter Concentrations}
  \author{Elkanah Nyabuto\footnote{email: e.nyabuto.1@research.gla.ac.uk}\\
    \small{University of Glasgow, UK}\\
    Philipp Otto\footnote{email: philipp.otto@glasgow.ac.uk}\\
    \small{University of Glasgow, UK}\\
    Yarema Okhrin\footnote{email: yarema.okhrin@uni-a.de}\\
    \small{University of Augsburg, Germany}}
    \date{}
  \maketitle

\begin{abstract}
\noindent We present an estimation procedure of spatial and temporal effects in spatiotemporal autoregressive panel data models using the Least Absolute Shrinkage and Selection Operator, LASSO \citep{tibshirani1996regression}. We assume that the spatiotemporal panel is drawn from a univariate random process and that the data follows a spatiotemporal autoregressive process which includes a regressive term with space-/time-varying exogenous regressor, a temporal autoregressive term and a spatial autoregressive term with an unknown weights matrix. The aim is to estimate this weight matrix alongside other parameters using a constraint penalised maximum likelihood estimator. Monte Carlo simulations showed a good performance with the accuracy increasing with an increasing number of time points. The use of the LASSO technique also consistently distinguishes between meaningful relationships (non-zeros) from those that are not (existing zeros) in both the spatial weights and other parameters. This regularised estimation procedure is applied to hourly particulate matter concentrations (PM$_{10}$) in the Bavaria region, Germany for the years 2005 to 2020. Results show some stations with a high spatial dependency, resulting in a greater influence of PM concentrations in neighbouring monitoring stations. The LASSO technique proved to produce a sparse weights matrix by shrinking some weights to zero, hence improving the interpretability of the PM concentration dependencies across measurement stations in Bavaria. 
\end{abstract}

\noindent%
{\it Keywords:} Spatiotemporal autoregressive model, LASSO, weights matrix, PM concentrations.
\vfill

\spacingset{1.45} 

\section{Introduction}
Spatial and temporal autoregressive models have become increasingly popular for analysing spatiotemporal data in various fields in recent years. The analysis of such data requires accounting for spatial and temporal dependencies which are modelled by spatiotemporal autoregressive models \citep{cressie2015statistics}. A fundamental element of these models is the incorporation of a spatial weights matrix, denoted as $\xmat{W}$, that characterises relationships between different spatial units \citep{anselin1988spatial}. The spatial weights matrix is a square matrix of size $N$, with $N$ being the number of spatial units in the model. This matrix defines the strength of connections or influences between neighbouring spatial units. The weight matrix choice depends on the underlying spatial process and/or the nature of interactions between spatial units \citep{lesage2009introduction}. Commonly used weight matrices include those based on contiguity (considering neighbours) or distance decay, weighing influences based on proximity \citep{ward2018spatial}, or other methods tailored to the specific characteristics of the spatiotemporal data under investigation.  However, this matrix is usually unknown for empirical applications \citep{gibbons2012mostly}. The thoughtful selection of this matrix is crucial, as it directly impacts the accuracy of parameter estimates and the model's ability to capture spatial dependencies effectively \citep{ward2018spatial}.

To address this challenge, \cite{bhattacharjee2006estimation} focused on the estimation of the spatial weights matrix within the context of diffusion in housing demand, providing a practical illustration of the implications of spatial dependencies. Further, \cite{getis2004constructing} explored the construction of spatial weights matrices through local statistics, offering alternative approaches for defining spatial relationships. Later, \cite{bhattacharjee2013estimation} introduced a methodology for estimating the spatial weight matrix under structural constraints, providing foundational insights into the relationships between spatial dependencies and structural considerations. \cite{ahrens2015two} also contributed by developing a two-step LASSO estimation method for the spatial weights matrix, enhancing precision through the sparsity-inducing properties of LASSO, a regression penalization technique introduced by \citep{tibshirani1996regression}. \cite{zhang2018spatial} worked into spatial weights matrix selection from a rich set of candidate matrices using a regularised approach and model averaging, enriching the toolkit for model selection and considering spatial dependencies. Moreover, \cite{otto2023estimation} proposed an adaptive LASSO approach for estimating the entire weights matrix without considering further structural constraints. Their approach allowed for an unknown number of structural breaks in the mean. \cite{merk2022estimation} considered a purely spatial setting, for which they also estimate the weights matrix using a LASSO approach and re-sampling sub-blocks of the entire spatial domain. These methods allow for efficient and effective variable selection in high-dimensional settings, making them ideal for estimating the weights matrix and another model parameter in the spatiotemporal autoregressive panel data models case. 

In this paper, we consider a spatiotemporal autoregressive model that includes a regressive term with space-time-varying exogenous regressors, higher-order temporal autoregressive lags, and a spatial autoregressive term with an unknown weights matrix. We propose to estimate $\xmat{W}$, alongside other model parameters using a constraint penalized maximum likelihood estimator with $l_1$ penalties.  The LASSO method under the sparsity assumption, deals with high dimensional settings where the number of regressors is more than the sample size \citep[see][for an overview]{friedman2010note}. The $l_1$-penalization, sets some of the coefficients to exactly zero, thus good for model selection \citep{tibshirani1996regression}. In the context of estimating the weights matrix, the  $l_1$ penalties impose sparsity on the weights matrix, improving prediction accuracy and interpretation \citep{wang2008note}. The LASSO estimator also provides a trade-off between model complexity and prediction accuracy by selecting relevant weights and shrinking the irrelevant ones towards zero hence resulting in a sparse weights matrix \citep{zhang2010nearly}. By estimating all elements of the weights matrix, as well as the autoregressive coefficients and the error variance, our method provides a comprehensive solution for modelling spatiotemporal autoregressive effects while avoiding overfitting and selecting a parsimonious model. This approach provides a powerful tool for estimating spatiotemporal autoregressive effects with high-dimensional covariates and large datasets e.g., for modelling hourly particulate matter concentrations.

The rest of the paper is organised as follows: Section 2 presents the spatiotemporal dynamic panel data model with its properties described. Section 3 discusses the estimation procedures employed in this paper. The constrained LASSO penalisation technique is also discussed, along with the selection consistency of the parameters. Section 4 presents the Monte Carlo simulations, which evaluate the performance of the model under different spatial and temporal resolutions. Section 5 presents the empirical illustration of the model for modelling air quality in Bavaria, Germany, and the discussion of the results thereafter.

\section{Spatiotemporal Dynamic Panel Data Model}
Spatiotemporal autoregressive models are used to capture and understand the relationships and dependencies among observations across both space and time. In this section, we introduce and discuss the spatiotemporal model used in this paper and important properties that will be needed in the coming sections. 

We assume that the spatiotemporal panel is drawn from a univariate random process $\{Y_t(\xvec{s}): t = 1, \ldots, T; \xvec{s} \in D_s\}$ with a finite set $D_s$ of $n$ locations $\xvec{s}_1, \ldots, \xvec{s}_n$. Moreover, let $\xvec{Y}_t$ the stacked vector of all random variables across space at time point $t$, i.e.,  $\xvec{Y}_{t} = (Y_t(\xvec{s}_1), \ldots, Y_t(\xvec{s}_n))'$. We consider data that follows a spatiotemporal dynamic panel data model 
\begin{equation*}
\xvec{Y}_{t} = \xmat{X}_{t} \xvec{\beta} + \sum_{p = 1}^{P} \mathbf{\Phi}_{p} \xvec{Y}_{t-p} + \xmat{W}\xvec{Y}_{t} + \xvec{\varepsilon}_{t}.
\end{equation*}
The model includes a regressive term with space-/time-varying $k-$dimensional exogenous regressor $\xmat{X}_t$ with $\xvec{\beta}=(\beta_1,\beta_2,\ldots,\beta_k)'$ regression coefficients, a temporal autoregressive term of order $P$, and a spatial autoregressive term with an unknown weights matrix $\xmat{W}$. The $\mathbf{\Phi}_{p}=\text{diag}(\phi_{p}(\xvec{s}_1), \ldots, \phi_{p}(\xvec{s}_n))$ are the temporal model coefficients for the previous $p$ realisations of $\xvec{Y}_{t}$ at time $t$ and that $\xvec{Y}_{t-p}$ is the $p^{th}$ previous realisations at time $t$. Here, we initially focus on a diagonal structure of these matrices; hence, spatial interactions are fully modelled by the contemporaneous spatial autoregressive term. In general, spatiotemporal spillover effects can be directly incorporated into the $\mathbf{\Phi}_{p}$ matrices, which we will leave for future research. Note that the model still includes spatiotemporal effects, as we will explain in further detail below. 

In spatial econometrics, $\xmat{W}$ is typically replaced by a product of an unknown coefficient $\rho$ with a prespecified -- assumed to be known -- matrix $\tilde{\xmat{W}}$, e.g., based on neighbourhood relations or inverse distance weights. However, this matrix is usually unknown for empirical applications \citep{gibbons2012mostly}; therefore, we consider the full matrix $\xmat{W}$ to be unknown. The goal is to estimate this weight matrix along with all other parameters. The random errors are assumed to be independent and identically distributed with zero mean and constant variance $\sigma_\varepsilon^2$ (across space and time). In the reduced form, we get that
\begin{equation*}\label{eq:reducedform}
\xvec{Y}_{t} = (\xmat{I} - \xmat{W})^{-1}\left( \xmat{X}_t \xvec{\beta} + \sum_{p = 1}^{P} \mathbf{\Phi}_p \xvec{Y}_{t-p}  + \xvec{\varepsilon}_t \right) \, .
\end{equation*}
Thus, it is important to ensure that $(\xmat{I} - \xmat{W})$ is non-singular. Furthermore, it is important to note that spatiotemporal autoregressive effects are also included in this model because
\begin{eqnarray*}
\xvec{Y}_{t} &  =  & \sum_{p = 1}^{P} (\xmat{I} - \xmat{W})^{-1} \mathbf{\Phi}_p \xvec{Y}_{t-p}  + (\xmat{I} - \xmat{W})^{-1}\left( \xmat{X}_t \xvec{\beta} + \xvec{\varepsilon}_t \right) \, \\
             &  =  & \sum_{p = 1}^{P} \sum_{k = 0}^{\infty} \xmat{W}^{k}\mathbf{\Phi}_p \xvec{Y}_{t-p} + (\xmat{I} - \xmat{W})^{-1}\left( \xmat{X}_t \xvec{\beta} + \xvec{\varepsilon}_t \right) \,
\end{eqnarray*}
with $\xmat{W}^0 = \xmat{I}$. If the series $\xmat{I} + \xmat{W} + \xmat{W}^2 + ...$ converges, $(\xmat{I} - \xmat{W})$ is invertible and the process is stationary. However, to ensure stationarity, restrictions on  $\xmat{\Phi}_p$ matrices are essential by ensuring its diagonal elements are less than $1$. This condition ensures that the influence of past values on current values diminishes over time, preventing explosive behaviour in the model and, thus, allowing for consistent and stable model predictions over time.  

\begin{proposition}\label{prop:stationarity}
Let $\xmat{W}$ be an $n \times n$ square matrix representing spatial relationships among $n$ spatial units, which is uniformly bounded in row and column sums, and let $\{\xmat{\Phi}_p : p = 1, ..., P\}$ be diagonal matrices of temporal coefficients for the same $n$ spatial units and $P$ temporal lags. The process is stationary and stable across time/space, if the inverse $(\xmat{I}-\xmat{W})^{-1}$ exists and the operator norm of the product $\sum_{p = 1}^{P}(\xmat{I}-\xmat{W})^{-1}\xmat{\Phi}_p$ is less than 1.

\end{proposition}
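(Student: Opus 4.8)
The plan is to reduce the question to the stability of a first-order vector autoregression and then invoke the standard Neumann-series / spectral-radius criterion. Starting from the reduced form already displayed above, write $\xmat{B}_p = (\xmat{I}-\xmat{W})^{-1}\xmat{\Phi}_p$ and $\xvec{V}_t = (\xmat{I}-\xmat{W})^{-1}(\xmat{X}_t\xvec{\beta}+\xvec{\varepsilon}_t)$, so the model becomes the $\mathrm{VAR}(P)$ recursion $\xvec{Y}_t = \sum_{p=1}^P \xmat{B}_p \xvec{Y}_{t-p} + \xvec{V}_t$. The matrices $\xmat{B}_p$ are well defined because $(\xmat{I}-\xmat{W})^{-1}$ exists by hypothesis, and the assumed uniform boundedness of the row and column sums of $\xmat{W}$ — together with the standard fact that this property is inherited by $(\xmat{I}-\xmat{W})^{-1}$ — guarantees that $\|\xmat{B}_p\|$ and $\bigl\|\sum_p \xmat{B}_p\bigr\|$ are finite and do not blow up with $n$, so the stated condition is a meaningful, dimension-stable restriction.

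Next I would pass to companion form: stacking $\xvec{Z}_t = (\xvec{Y}_t', \xvec{Y}_{t-1}', \dots, \xvec{Y}_{t-P+1}')'$ turns the recursion into $\xvec{Z}_t = \xmat{A}\xvec{Z}_{t-1} + \xvec{U}_t$, where $\xmat{A}$ is the $nP\times nP$ block-companion matrix whose first block row is $(\xmat{B}_1,\dots,\xmat{B}_P)$ and whose remaining block rows are shift-identities, and $\xvec{U}_t$ carries $\xvec{V}_t$ in its first block. The process is stationary and stable precisely when $\rho(\xmat{A})<1$: then $\xmat{A}^j\to \xmat{0}$ geometrically, the series $\xvec{Z}_t = \sum_{j\ge 0}\xmat{A}^j\xvec{U}_{t-j}$ converges in mean square with absolutely summable coefficients, and the induced causal $\mathrm{MA}(\infty)$ representation $\xvec{Y}_t = \sum_{j\ge0}\xmat{\Psi}_j\xvec{V}_{t-j}$ is weakly stationary given i.i.d.\ errors of finite variance (strictly stationary if $\xmat{X}_t$ is itself stationary; otherwise the deterministic mean component $\sum_j \xmat{\Psi}_j(\xmat{I}-\xmat{W})^{-1}\xmat{X}_{t-j}\xvec{\beta}$ remains merely bounded, i.e.\ ``stable''). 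Equivalently, the criterion can be stated as: all roots of $\det\bigl(\xmat{I}-\sum_{p=1}^P z^p\xmat{B}_p\bigr)=0$ lie strictly outside the closed unit disk. It then remains to show the hypothesis implies $\rho(\xmat{A})<1$; for $P=1$ this is immediate since $\sum_p\xmat{B}_p=\xmat{B}_1$ and $\rho(\xmat{B}_1)\le\|\xmat{B}_1\|<1$, while for general $P$ the natural route is to show $\xmat{I}-\sum_{p=1}^P z^p\xmat{B}_p$ is nonsingular for every $|z|\le 1$, since any root would force $\bigl\|\sum_p z^p\xmat{B}_p\bigr\|\ge 1$.

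Here lies the main obstacle. The crude estimate $\bigl\|\sum_p z^p\xmat{B}_p\bigr\|\le\sum_p|z|^p\|\xmat{B}_p\|\le\sum_p\|\xmat{B}_p\|$ only delivers the stronger sufficient condition $\sum_{p}\|\xmat{B}_p\|<1$, not the stated $\bigl\|\sum_p\xmat{B}_p\bigr\|<1$; the two coincide only under structural restrictions — for instance when $\xmat{W}$ and the $\xmat{\Phi}_p$ are entrywise nonnegative, so that $z\mapsto\bigl\|\sum_p z^p\xmat{B}_p\bigr\|$ on $[0,1]$ is maximised at $z=1$ — and in full generality the weaker condition need not guarantee stationarity, as a scalar $\mathrm{AR}(2)$ with $\phi_1=-0.9,\ \phi_2=0.9$ already illustrates (there $\phi_1+\phi_2=0$ yet one companion eigenvalue is $-1.5$). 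Closing this gap therefore requires either strengthening the hypothesis to $\sum_p\|(\xmat{I}-\xmat{W})^{-1}\xmat{\Phi}_p\|<1$, or invoking the nonnegativity-type structure (on $\xmat{W}$ and on the diagonal $\xmat{\Phi}_p$, bounded by $1$ as discussed above) under which these spatial and temporal coefficients are typically used, or restricting to a setting where the autoregressive lag operator acts as a genuine contraction on the relevant sequence space. I would make one such choice explicit at the outset and then complete the argument with the routine Neumann-series estimate for $(\xmat{I}-\sum_p z^p\xmat{B}_p)^{-1}$ on $|z|\le 1$, yielding $\rho(\xmat{A})<1$ and hence the claimed stationarity and stability.
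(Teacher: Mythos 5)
Your proposal is correct in its analysis, but it takes a genuinely different --- and substantially more complete --- route than the paper. The paper's proof is essentially a two-line norm computation: it bounds $\bigl\|\sum_{p=1}^{P}(\xmat{I}-\xmat{W})^{-1}\xmat{\Phi}_p\bigr\|$ by $\sum_{p=1}^{P}\|(\xmat{I}-\xmat{W})^{-1}\|\cdot\|\xmat{\Phi}_p\|$ via the triangle inequality and submultiplicativity, argues that $\sum_{p}\|\xmat{\Phi}_p\|<1$ when the diagonal entries satisfy $\sum_{p}\phi_p(\xvec{s}_i)<1$ for every $i$, asserts $\|(\xmat{I}-\xmat{W})^{-1}\|<1$ from the uniform boundedness of $\xmat{W}$, and then simply declares the process stationary; the step from the norm bound to stationarity of the VAR($P$) recursion is never made explicit. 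Your companion-form/spectral-radius argument supplies exactly that missing step, and in doing so you correctly diagnose the real weakness of the statement: for $P\ge 2$ the hypothesis $\bigl\|\sum_{p}\xmat{B}_p\bigr\|<1$ with $\xmat{B}_p=(\xmat{I}-\xmat{W})^{-1}\xmat{\Phi}_p$ does not by itself exclude roots of $\det\bigl(\xmat{I}-\sum_{p}z^p\xmat{B}_p\bigr)$ inside the closed unit disk; your scalar AR(2) example with $\phi_1=-0.9$, $\phi_2=0.9$ (companion eigenvalue $-1.5$) is a valid counterexample to the literal statement. What rescues the proposition is precisely what the paper uses implicitly: the summed condition $\sum_{p}\|\xmat{B}_p\|<1$ (this is what its chain of inequalities actually verifies), or equivalently the entrywise nonnegativity constraints $w_{ij}\ge 0$ and $0\le\phi_p(\xvec{s}_i)$ that the paper only imposes later, in the estimation section, under which $\bigl\|\sum_{p}z^p\xmat{B}_p\bigr\|$ on the closed disk is maximised at $z=1$, as you note. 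With either strengthening made explicit, your Neumann-series argument giving $\rho(\xmat{A})<1$ completes the proof, and for $P=1$ (the case used in the application) the stated condition is indeed sufficient as it stands.

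Two cautions when you align your write-up with the paper's: its intermediate claim $\|(\xmat{I}-\xmat{W})^{-1}\|<1$ cannot hold in general (for nonnegative $\xmat{W}$ the Neumann series gives $(\xmat{I}-\xmat{W})^{-1}\ge\xmat{I}$ entrywise, so any induced row- or column-sum norm is at least one), and its displayed equality $\|(\xmat{I}-\xmat{W})^{-1}\xmat{\Phi}_p\|=\|(\xmat{I}-\xmat{W})^{-1}\|\cdot\|\xmat{\Phi}_p\|$ should only be an inequality. So when you make the strengthened hypothesis explicit, state it directly as $\sum_{p}\|(\xmat{I}-\xmat{W})^{-1}\xmat{\Phi}_p\|<1$ (or keep the stated norm condition but add the nonnegativity structure), rather than routing the bound through $\|(\xmat{I}-\xmat{W})^{-1}\|$.
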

\begin{proof}
Express the series $\sum_{p = 1}^{P}\xmat{\Phi}_p(\xmat{I}-\xmat{W})^{-1}$ as the sum of its terms,
$$\sum_{p = 1}^{P}(\xmat{I}-\xmat{W})^{-1}\xmat{\Phi}_p =(\xmat{I}-\xmat{W})^{-1} \xmat{\Phi}_1 + (\xmat{I}-\xmat{W})^{-1}\xmat{\Phi}_2 + \ldots + (\xmat{I}-\xmat{W})^{-1}\xmat{\Phi}_P$$
Using the triangle inequality property of the operator norm, we have,
\begin{align*}
\left\| \sum_{p = 1}^{P} (\xmat{I}-\xmat{W})^{-1}\xmat{\Phi}_p \right\| &\leq \sum_{p = 1}^{P} \| (\xmat{I}-\xmat{W})^{-1}\xmat{\Phi}_p \| \\
&= \sum_{p = 1}^{P} \| (\xmat{I}-\xmat{W})^{-1} \|\cdot \| \xmat{\Phi}_p \| 
\end{align*}
If the elements of $\xmat{\Phi}_p$ are less than 1,  then the maximum norm of $\xmat{\Phi}_p$ is less than 1, denoted by $\| \xmat{\Phi}_p \|_{\text{max}} < 1$. For instance, if $\sum_{p=1}^{P}\phi_{ip} < 1$ for all $i = 1, ..., n$, i.e., the diagonal elements of $\xmat{\Phi}_p$ sum to a value smaller than one for each location, we get that $\sum_{p = 1}^{P} \| \xmat{\Phi}_p \| < 1$.

\noindent Similarly, for spatial stability/stationarity, assuming that $\xmat{W}$ is uniformly bounded in row and column sums in absolute value \cite{yu2008quasi},  then $||(\xmat{I}-\xmat{W})^{-1}|| < 1$ denoted by $\| (\xmat{I}-\xmat{W})^{-1} \|_{\text{max}} < 1$. Consequently, the series $(\xmat{I}-\xmat{W})^{-1} = \xmat{I} + \xmat{W} + \xmat{W}^2 + ...$ converges and the inverse $(\xmat{I}-\xmat{W})^{-1}$ exists. Thus, we have:
\begin{align*}
\left\| \sum_{p = 1}^{P} (\xmat{I}-\xmat{W})^{-1}\xmat{\Phi}_p \right\| &\leq \sum_{p = 1}^{P} \| (\xmat{I}-\xmat{W})^{-1} \|\cdot \| \xmat{\Phi}_p \|< 1\\
% &< P \cdot \| \xmat{\Phi}_p \|_{\text{max}} \cdot \| (\xmat{I}-\xmat{W})^{-1} \|_{\text{max}}
\end{align*}
and the process is stationary across space and time.

\end{proof}
\section{Parameter Estimation}
In this section, we discuss the penalised maximum likelihood technique in estimating the model's parameters. We also show how the penalties are introduced into the likelihood before optimisation. Further, we discuss the selection consistency of the LASSO estimators. 

\subsection{Penalised Maximum Likelihood Estimation}

We aim to estimate the model parameters, $\xvec{\Theta} = \{ \xvec{\beta}, \xmat{W}, \mathbf{\Phi}_1, \ldots, \mathbf{\Phi}_P, \sigma_\varepsilon^2\}$. Given the error term, $\xvec{\epsilon_t}$ expressed as the difference between the observed and predicted value of $\xvec{Y}_t$, we express it as 
$$\xvec{\varepsilon}_t = (\xmat{I}-\xmat{W})\xmat{Y}_t -\xmat{X}_t \xvec{\beta}-\sum_{p = 1}^{P} \mathbf{\Phi}_{p} \xvec{Y}_{t-p},$$ where the error term, $\xvec{\epsilon_t}$ is assumed to be independent and identically distributed (i.i.d.) with zero mean and constant variance $\sigma^2_\varepsilon$ i.e., $\xvec{\varepsilon}_t \sim \mathcal{N}(\xvec{0}, \sigma_\varepsilon^2\xmat{I}_n).$ 
% \begin{equation*}
% \Theta=\{\xvec{\beta}, \xmat{W}, \mathbf{\Phi}_1, \ldots, \mathbf{\Phi}_P, \sigma^2_\varepsilon \} \, \text{and} \, \xmat{\Upsilon}_t(\Theta)=(\xmat{I}-\xmat{W})\xmat{Y}_t -\xmat{X}_t \xvec{\beta}-\sum_{p = 1}^{P} \mathbf{\Phi}_{p} \xvec{Y}_{t-p}.
% \end{equation*} 
Based on this assumption, the likelihood function at each time point $t$ will be expressed as;
$$f(\xvec{Y_t} | \xvec{\Theta}) = \frac{1}{(\sqrt{2 \pi \sigma_\varepsilon^2})^n} \exp \left( - \frac{\xvec{\varepsilon}_t^{\top}\xvec{\varepsilon}_t}{2 \sigma_\varepsilon^2} \right)|\text{det}(\xmat{I}_n-\xmat{W}_n)|$$
and thus the joint likelihood across $T$ time points is given by 
$$L(\xvec{\Theta}) = \prod_{t=1}^{T} \frac{1}{(2 \pi \sigma_\varepsilon^2)^{n/2}}\exp \left(- \frac{\xvec{\varepsilon}_t^\top \xvec{\varepsilon}_t}{2 \sigma_\varepsilon^2} \right)|\text{det}(\xmat{I}_n-\xmat{W}_n)|$$ and the corresponding log-likelihood function is given by:  
\begin{equation*}
\ln L(\xvec{\Theta}) = T \ln |\text{det}(\xmat{I}_n-\xmat{W}_n)|-\frac{nT}{2} \ln 2 \pi \sigma_\varepsilon^2 -\frac{1}{2\sigma_\varepsilon^2}\sum_{t = 1}^{T}\xvec{\varepsilon}_t^{\top}\xvec{\varepsilon}_t
\end{equation*} 
To enforce sparsity in the parameter estimates, we introduce $l_1-$penalties to the likelihood function and define the penalised negative log-likelihood as; 
$$-\ln L(\xvec{\Theta}) + \, \left( \lambda_1\sum_{i,j = 1}^{n} |w_{ij}| + \lambda_2\sum_{i = 1}^{n}\sum_{p = 1}^{P} |\phi_p(s_i)| + \lambda_3\sum_{k = 1}^{k} |\beta_k| \right)$$ 
where $\lambda = (\lambda_1,\lambda_2,\lambda_3) $ is the $l_1$ penalties.  Each set of parameters is subjected to an $l_1$ penalty due to its capacity to induce sparsity in parameter estimates \citep{tibshirani1996regression}. The paper employs three different LASSO penalties for the set of regressor coefficients, temporal coefficients, and weights. The $l_1$ penalty shrinks some of the $\beta$'s, $\phi$'s, and $w_{ij}$ to exactly zero if they have less or no significance in the model. This ensures precise control over model complexity, ensuring an adept capture of spatiotemporal dynamics while maintaining interpretative clarity and predictive efficacy of the spatiotemporal autoregressive model. More precisely, the estimates are obtained by the objective function with a constrained LASSO given by
\begin{align*}
\hat{\xvec{\Theta}} \;\; =  \;\; \arg\min_{\xvec{\Theta}} \; & - L(\xvec{\Theta}) +\left(\lambda_1 \sum_{i,j = 1}^{n} |w_{ij}| + \lambda_2\sum_{i = 1}^{n}\sum_{p = 1}^{P} |\phi_p(s_i)| + \lambda_3\sum_{k = 1}^{k} |\beta_l| \right) \\& \text{subject to: }  \\& w_{ij} \geq 0 , \, w_{ii} = 0 , \sum_{j=1}^{n} w_{ij} < 1 \; \text{for all} \; 1 \leq i,j, \leq n  \, ,
\\& ||\sum_{p = 1}^{P}(\xmat{I}-\xmat{W})^{-1}\xmat{\Phi}_p|| <  1 \; \; \text{if} \; \; 0\leq \phi_1(s_i) \leq 1\; \, \; \forall i. %< ||\sum_{p = 1}^{P}\xmat{\Phi}_p||\cdot||(I-W)^{-1}||
\end{align*}
All parameter estimates are obtained conditional on the first $P$ observations $\xvec{y}_1, \ldots, \xvec{y}_P$ since the model includes a temporal autoregressive component of order $P$. The constraints ensure that the model is well-defined in terms of the non-singularity of $(\xmat{I} - \xmat{W})$ and that the spatiotemporal process is stationary, 
\textit{see Proposition \ref{prop:stationarity}}. The constraint term $||\xmat{W}|| <1 $ could be computationally implemented by ensuring $w_{ij} \geq 0 , \, w_{ii} = 0 ,\sum_{j=1}^{n} w_{ij} < 1 \; \text{for all} \; 1 \leq i,j, \leq n $. We do not assume further restrictions on $\xmat{W}$ such as symmetry or directional dependency, thus its estimation is not necessarily unique \citep{kelejian1999generalized}. In particular, this is the case if there is no sufficient variation in the exogenous regressor $\xmat{X}_t$ which helps for the identification of $\xmat{W}$. It should be noted that the predicted spatial lag term $\hat{\xmat{W}}\xvec{Y}$ is always unique \citep{tibshirani2013lasso, ali2019generalized}. Therefore, the single estimated spatial weights should be interpreted with caution, but they will always indicate the true underlying spatial dependence structure. A review of identification of spatial spillovers in the econometric setting can be found in \cite{debarsy2025identification}.

\noindent The Sequential Quadratic Programming (SQP) algorithm is used in this optimisation by iteratively adjusting the parameters for minimising the objective function while incorporating constraints. The algorithm leverages the gradient and Hessian of the objective function to guide parameter updates, ensuring efficient convergence toward the optimal solution by considering both the direction and curvature of the objective function landscape \citep{wright2006numerical}. Though effective in its performance, its implementation in high-dimensional parameter spaces presents computational challenges, as the number of parameters increases with the number of spatial units. Additionally, the non-convexity of the objective function and the algorithm's sensitivity to initialisation can affect its performance, necessitating additional computational resources \citep{zhang2009some}.

\subsection{Selection Consistency of LASSO Estimators}

Selection consistency is a crucial property in the estimation of spatial weights, especially when utilising LASSO procedures. Each weight in this context represents a link between two observations, and identifying these links accurately is of significant interest in practical applications. However, it is debatable whether all individual links can be identified reliably. In particular, the challenge lies in the potential for missed links between two arbitrary locations to be compensated by introducing incorrect links via a third, unrelated location. For instance, a direct link from location A to B might be missed, but the model could incorrectly compensate by establishing links from A to C and C to B. Moreover, in the case of two-sided or asymmetric links, i.e., where the weight from A to B differs from the weight from B to A, it is particularly challenging to reveal those weights. Since any estimator would only have information based on the (sample) correlation of the observation, which is encoded in the symmetric covariance matrix, repeated observations across time or spatial subsets are needed \citep[cf.][]{otto2023estimation,merk2022estimation}. This problem is particularly relevant because the spatiotemporal setting is high-dimensional, i.e., the number of weights to be estimated may exceed the number of available observations. Specifically, while the number of weights is of the order $O(n^2)$, the number of observations is of the order $O(nT)$. Regressing each observation on all other observations, weighted by individual parameters, parallels the task of estimating the spatial weight matrix and was considered in the literature on covariance or neighbourhood selection \citep{dempster1972covariance,meinshausen2006high}.

LASSO, in general, provides a selection consistent estimate, as considered in \cite{zhao2006model,fan2001variable,zhang2008sparsity}. Furthermore, \cite{meinshausen2006high} have shown the selection consistency of a standard LASSO estimator for the neighbourhood selection. The neighbourhood $ne_A$ of a node/location A in a graph $G$ is defined as the smallest subset of $\Gamma \setminus \{a\}$ such that, given all variables $X_{ne_A}$ in the neighbourhood, $X_a$ is conditionally independent of all remaining variables. Assuming independent and homoscedastic normal errors, $\boldsymbol{\varepsilon}_t \sim N_n(\boldsymbol{0}, \sigma^2 \mathbf{I}_n)$, independently for all $t$, we can rewrite the model as
\[ \boldsymbol{Y}_t = (\mathbf{I}_n - \mathbf{W})^{-1} \left(\sum_{p=1}^{P}\mathbf{\Phi}_p Y_{t-p} + \mathbf{X}_t \boldsymbol{\beta} + \boldsymbol{\epsilon}_t\right) \, .\]
For simplicity, assume $P = 1$ and a weak temporal dependence structure\footnote{Note that, for a complete spatiotemporal solution, the covariance matrix needs to solve the Lyapunov equation considering all temporal components $\mathbf{\Phi}_p$.}, then we can approximate the covariance as
\[ \text{Cov}(\boldsymbol{Y}_t) \approx \sigma^2_\varepsilon(\mathbf{I} - \mathbf{W})^{-1}(\mathbf{I} - \mathbf{W}')^{-1} \]
and the precision matrix
\[ \Omega \approx \frac{1}{\sigma^2_\varepsilon} (\mathbf{I} - \mathbf{W}')(\mathbf{I} - \mathbf{W}) \, . \]
Thus, the conditional independence between observations is directly encoded in the weight structure given by $\mathbf{W}$. If the underlying dependence structure is sparse, \cite{meinshausen2006high} proved the neighbourhood selection consistency when regressing each observation with all others, effectively controlling both type-1 and type-2 errors.

\section{Simulation Study}
A Monte Carlo simulation was set up to assess the efficiency of the model to detect existing spatial and temporal dependencies (connections and relationships between various spatial locations/units) within the data. Precisely, the simulation assesses how LASSO penalties can distinguish between meaningful spatial weights (non-zero weights) from insignificant weights (zero weights).  

\subsection{Spatiotemporal design}
The spatiotemporal process is simulated on a spatial lattice of $ n \in \{4, 9, 16,  25\}$ spatial units/locations, discrete time points $t = 1,\ldots,T$ with $T \in \{50, 100 , 200\}$ and $ m = 100$ replications. The temporal dependency was set up such that the first $25\%$ of the spatial units had no temporal dependency and the remaining spatial units had a uniform temporal dependency of $0.3$ with a time lag of $P = 1$. A row-standardised weighting scheme based on Queen’s contiguity criterion and then multiplied with a moderate spatial coefficient, $\rho=0.6$, was considered when defining the spatial dependency. A three-dimensional array of exogenous regressors was generated with dimensions corresponding to the number of covariates ($k$) with coefficients, $\xvec{\beta}=(3, 0, 2)'$, spatial units ($n$), and time points ($T$). Specifically, these regressors were simulated from a normal distribution, ensuring that each covariate is independent across spatial units and time points and has a unit variance. The errors generated are independent and identically distributed from a normal distribution with $\sigma_\varepsilon^2 = 1$ to ensure the model is evaluated, validated, and assessed. The resulting number of parameters to be estimated is $k+n^2+1$. 

To determine the optimal regularisation parameters for the LASSO model, we employed block cross-validation, preserving both spatial and temporal dependencies within the data \citep[see][for a review of cross-validation methods in spatiotemporal statistics]{otto2024review}. The dataset was partitioned into five blocks, each consisting of $n \times \frac{T}{5}$ observations, where $n$ represents the number of spatial locations and $T$ is the total number of periods. The blocks were sequentially constructed, with the first block containing data from time $i = 1$ to $\frac{T}{5}$, the second block from $\frac{T}{5} + 1$ to $2 \times \frac{T}{5}$, and so on. This method allowed for the independent evaluation of model performance while ensuring that the spatial and temporal structures of the data were preserved. The LASSO penalty combination that minimised the root mean squared error (RMSE) across the blocks was chosen as the optimal regularisation for the full optimisation.

\subsection{Simulation results}
The simulation results obtained from estimating the spatial weights matrix for the different spatiotemporal settings are shown in Figure \ref{weights}. For each spatial unit, it is observed that the model improves in its estimation capability as the number of time points increases. With an increase in the number of spatial units, for instance, $n=9$ spatial units, the model performs better in estimating the weights, with the prediction accuracy increasing with an increase in time points. 
\begin{figure}
\centering
\includegraphics[width=17cm]{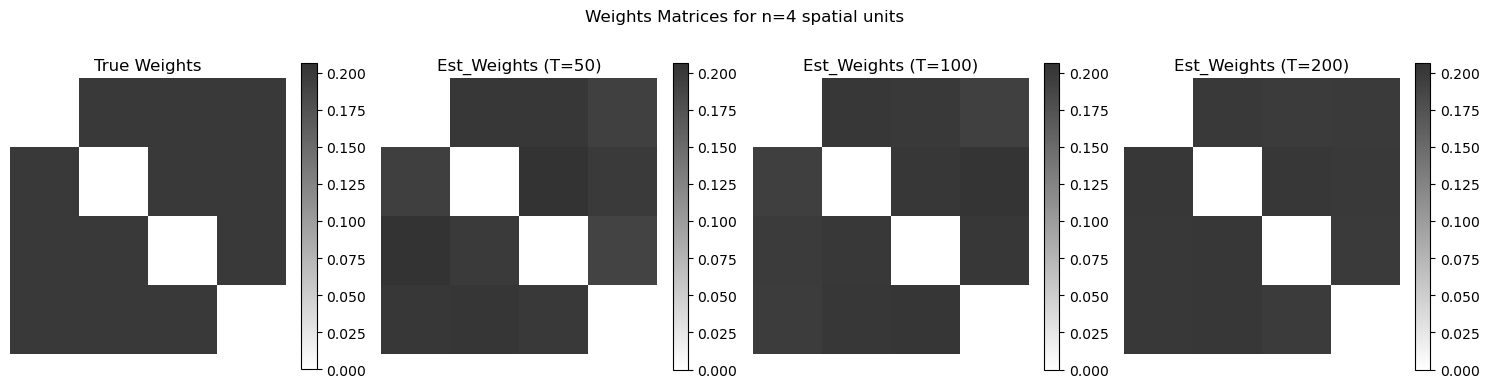}
\includegraphics[width=17cm]{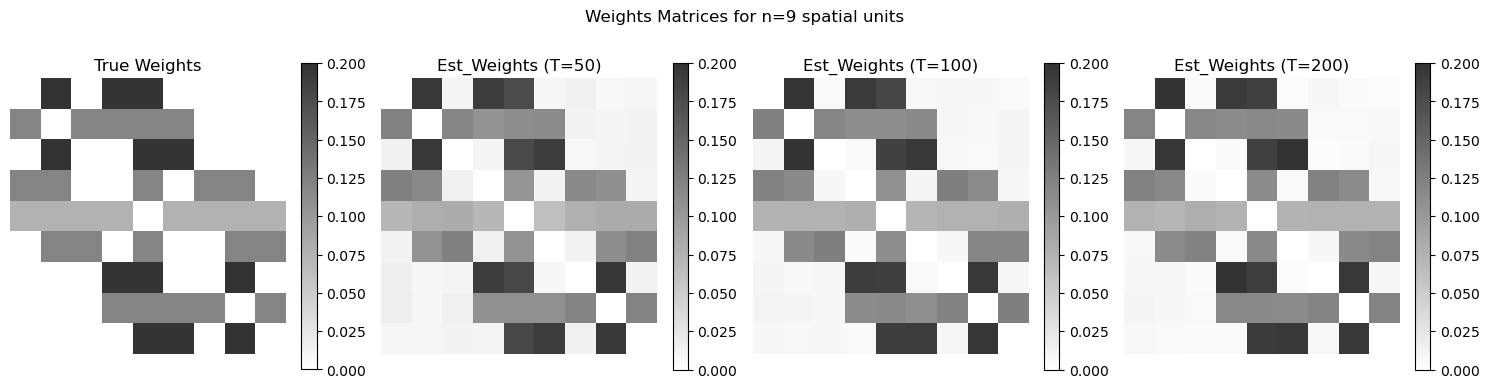}
\includegraphics[width=17cm]{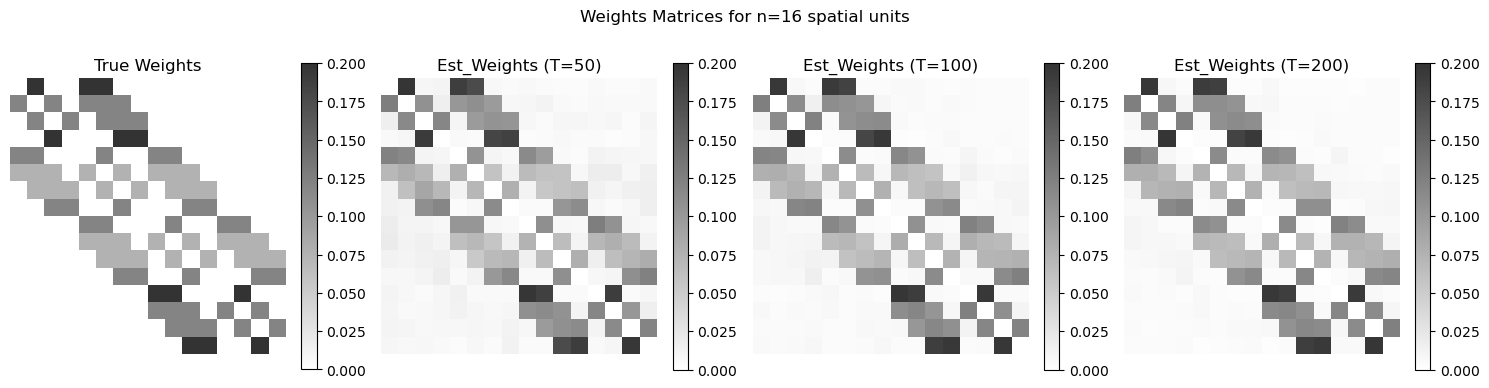}
\includegraphics[width=17cm]{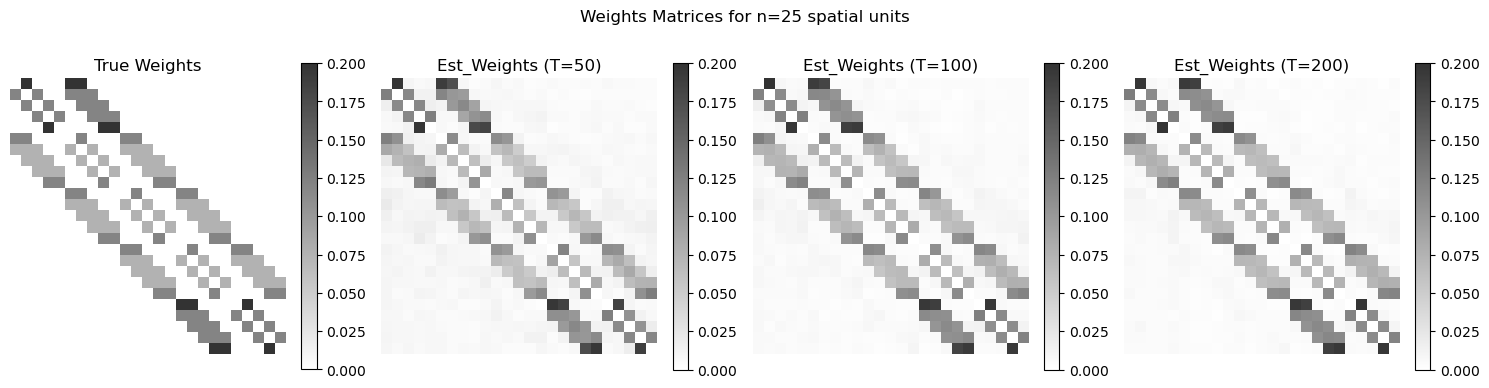}
\caption{Comparison of true weights (left) and estimated weights (right) for four, nine, sixteen, and twenty-five spatial units across 50, 100, and 200 time-points respectively.}
\label{weights}
\end{figure}
\noindent Recall that the total number of parameters estimated given by $n^2 + k + 1$ is quadratically increasing in the number of locations, resulting in a high demand for computational resources. Assessment of the model performance of the spatiotemporal model framework is conducted for each set of parameters by considering the average bias, average relative bias, and the root mean squared error at different time points as,
\begin{eqnarray*}
\overline{\text{Bias}}_\beta & = & \frac{1}{mk}\sum_{i=1}^m(\hat{\xvec{\beta}}_{i}-\xvec{\beta})'\xvec{1},\\
\overline{\text{Bias}}_\phi & = &\frac{1}{nm}\sum_{i=1}^m(\hat{\xvec{\phi}}_{i}-\xvec{\phi})'\xvec{1},  \\
\overline{\text{Bias}}_w & = & \frac{1}{m(n^2-n)}\sum_{i=1}^m (\hat{\xvec{w}}_{i}-\xvec{w})'\xvec{1} \; , \;\; \text{and} \\
\overline{\text{Bias}}_\sigma & = & \frac{1}{m}\sum_{i=1}^m(\hat{\sigma}_i^2-\sigma^2_\varepsilon)
\end{eqnarray*}
for the regressor coefficients, $\xvec{\beta}$'s, the temporal coefficients, $\xvec{\phi}$'s, spatial weights, $\xvec{w}$'s and error variances, $\sigma^2_\varepsilon$, respectively. Note that $\xvec{\phi}$ is the vector of diagonal elements of the temporal coefficients matrix while $\xvec{w}$ is the vectorised spatial weights matrix excluding the diagonal elements which are zero by definition. The root mean square error (RMSE) of the fitted process parameters is computed as follows: 
\begin{eqnarray*}
RMSE_\beta &= & \sqrt{\frac{1}{mk}\sum_{i=1}^m(\hat{\xvec{\beta}}_{i}-\xvec{\beta})^{'}(\hat{\xvec{\beta}}_{i}-\xvec{\beta})},\;\; \\
RMSE_\phi & = & \sqrt{\frac{1}{nm}\sum_{i=1}^m(\hat{\xvec{\phi}}_{i}-\xvec{\phi})^{'}(\hat{\xvec{\phi}}_{i}-\xvec{\phi})},  \\
RMSE_w & = & \sqrt{\frac{1}{m(n^2-n)}\sum_{i=1}^m (\hat{\xvec{w}}_{i}-\xvec{w})^{'}(\hat{\xvec{w}}_{i}-\xvec{w})} \; , \;\; \text{and} \;\; \\
RMSE_\sigma & = & \sqrt{\frac{1}{m}\sum_{i=1}^m(\hat{\sigma}_i^2-\sigma^2)^2}.
\end{eqnarray*}
Moreover, the full model root mean squared error is given by
$$RMSE = \sqrt{\frac{1}{nm}\sum_i^m (\hat{\xvec{Y}}-\xvec{Y}_{i})^{'}(\hat{\xvec{Y}}-\xvec{Y}_{i})}.$$ The average time taken for each of the estimations across replications is also recorded in hours alongside the simulation results as shown in Table \ref{tab:results}.

%\afterpage{%v
\begin{sidewaystable}[!p]
%\begin{table}[h!]
\begin{small}
\begin{tabular}{cccccccccccccc}
\hline
\multirow{3}{*}{Parameter} & \multirow{3}{*}{Measure} & \multicolumn{12}{c}{No of Spatial units \& Time points} \\\cline{3-14}
                           &                        & \multicolumn{3}{c}{$n = 4 \; \;(\text{par}=20)$}  & \multicolumn{3}{c}{$n = 9 \; \;(\text{par}=85)$} & \multicolumn{3}{c}{$n = 16 \; \;(\text{par}=260)$}  & \multicolumn{3}{c}{$n = 25 \; \;(\text{par}=629)$}  \\\cline{3-14}
                           &                          & $T = 50$     &  $100$  & $200 $  & $50 $    &  $100 $ & $200$ & $50$ &  $100$  & $200$   & $50$     &  $100$  & $200  $ \\
\hline
\multirow{2}{*}{Beta ($\xvec{\beta}$)}      & Bias    & 0.0032  & -0.0015   & 0.0058     & 0.0006   & 0.0018      & 0.0260 & 0.0093 &  0.0041  & 0.0037   & 0.0042  & 0.0080    & 0.0048   \\
&  MAE   & 0.0660  & 0.0446  &   0.0309    &  0.0433  & 0.0303      & 0.0189 &0.0321 &  0.0217  & 0.0146  & 0.0270 &  0.0182   &  0.0137 \\
& RMSE    & 0.0834  & 0.0547 & 0.0378      & 0.0537  & 0.0370     & 0.0239&  0.0400  & 0.0273  &  0.0183      & 0.0340  &  0.0239     &  0.0168  \\
\hline
\multirow{2}{*}{Phi ($\xmat{\Phi}$)}        & Bias    & 0.0015  & 0.0013  & 0.0009       & 0.0005   & 0.0003    & 0.0003 & 0.0014  & 0.0008 &  0.0029   & 0.0011  & 0.0017     & 0.0048    \\
& MAE    & 0.0185  &  0.0131 &   0.0093    &  0.0229  &  0.0157     & 0.0108 &0.0251 &  0.0176  & 0.0129  & 0.0262 &  0.0182   &  0.0138 \\
& RMSE    & 0.0247  & 0.0177   & 0.0122       & 0.0274   & 0.0199     & 0.0138    & 0.0324 & 0.0224    & 0.0165 & 0.0340  & 0.0234     & 0.0176   \\\hline
\multirow{2}{*}{$\xmat{W}$}           & Bias          & -0.0006  & -0.0004   & 0.0000      &0.0012    & 0.0011     & 0.0008 & 0.0025  & 0.0018  &  0.0014   & 0.0340  & 0.0023     & 0.0019    \\
&  MAE    &  0.0319 & 0.0226  &   0.0146    &  0.0232  &  0.0157     & 0.0112 & 0.0177   & 0.0121  & 0.0088 &   0.0142  &  0.0096 & 0.0073  \\
& RMSE    & 0.0406  & 0.0283   & 0.0183      & 0.0325    & 0.0222    & 0.0154 & 0.0267  & 0.0182  &  0.0135   & 0.0232  & 0.0157     & 0.0123   \\\hline
\multirow{2}{*}{$\xmat{W} = 0$}           & Bias      & --      &   --     &    --  & 0.0112   & 0.0076     & 0.0062 & 0.0088  & 0.0062  &  0.0050  & 0.0077  & 0.0054     & 0.0044  \\
&  MAE    & --  & --  &  --     &  0.0112  &   0.0076    & 0.0062 & 0.0088 &  0.0062  & 0.0050  & 0.0077 &  0.0054   & 0.0044  \\
& RMSE    & --      &   --     &    --       & 0.0220    & 0.0150    & 0.0109 & 0.0183  & 0.0125  &  0.0101   & 0.0169  & 0.0116     & 0.0098    \\\hline
\multirow{2}{*}{$\xmat{W} \neq 0$ }        & Bias     & -0.0006  & -0.0004    & 0.0000     & -0.0068    &-0.0042    & -0.0036 & -0.0093  & -0.0064  &  -0.0054  & -0.0103  & -0.0073     & -0.0058   \\
& MAE    &  0.0319 & 0.0226  &  0.0146     &  0.0223  &  0.0152     & 0.0112 & 0.0341& 0,0230   & 0.0158  & 0.0348 &  0.0229   &  0.0165 \\
& RMSE     & 0.0406  & 0.0283    & 0.0183     & 0.0409   & 0.0279    & 0.0190 & 0.0423  & 0.0287  &  0.0197   & 0.0430  & 0.0286    & 0.0205    \\\hline
\multirow{2}{*}{Sigma $\sigma_\varepsilon^2$}    & Bias  & -0.0966  & -0.034    & -0.0171      & -0.1361  & -0.0698     & -0.0286 & -0.1762  & -0.0847  &  -0.0375   & -0.2248  & -0.1045     & -0.0418    \\
&  MAE    & 0.1199  & 0.0571  &  0.0402     &  0.1363  &  0.0742     & 0.0344 & 0.1762 & 0.0847   & 0.0392  & 0.2248 & 0.1045    & 0.0423  \\
& RMSE  & 0.1429  & 0.0683   & 0.0508      & 0.1492    & 0.0871    & 0.0435 & 0.1823  & 0.0928  &  0.0448   & 0.2275  & 0.1079     & 0.0462  \\\hline
Time $t$   &  hours  & 1.1454  & 1.2235   & 1.5302    & 0.3712  & 1.0201   & 4.4065  & 2.6511  & 3.9110  &  6.5554  &  13.4742 &  16.8284   & 26.3437  \\\hline                     
\end{tabular}
\end{small}
\caption{Mean Bias and Root Mean Square Error (RMSE) of Estimated Parameter Values}
\label{tab:results}
\end{sidewaystable}
%}
\noindent In our analysis, we examine the model estimation performance of the entire weights matrix $\xmat{W}$, focusing on both non-zero and zero elements separately. This implies that the row with $\xmat{W}$ in Table \ref{tab:results} has all elements of the weights both zeros and non zeros, row with $\xmat{W}= 0$ considers elements of $\xmat{W}$ that are zero only while the row $\xmat{W} \neq 0$ considers non-zero elements of the spatial weights matrix $\xmat{W}.$ The mean absolute error is also calculated for each case to assess the accuracy of parameter estimation relative to the true values. For the whole weights matrix, the bias provides an overview of the overall accuracy of the estimation process. In contrast, analysing only the non-zero elements evaluates how accurately the model captures spatial relationships represented by these weights. Similarly, the estimation performance of zero elements in $\xmat{W}$ is assessed to understand how well the model identifies the absence of spatial relationships among certain units. These analyses offer insights into the effectiveness of the spatiotemporal modelling framework in capturing spatial dependencies within the data.

\noindent In general, the estimation accuracy of all parameters are increasing with an increase in time points while the average bias decreases with an increase in time points, $T$. For the case of weights, this signifies that the estimation of the spatial dependence structure in terms of the magnitude of the weights and the number of correctly estimated zero and non-zero weights was getting more precise when the number of time points increased. One captivating observation is the consistent improvement of estimation accuracy with an increase in time points, particularly with regard to the spatial dependency structure. A richer temporal dataset will provide a key advantage in uncovering the intricacies of spatial dependencies and ensuring the precision of the model's outputs, thereby enhancing its applicability to real-world scenarios. 

\section{Spatiotemporal air quality modelling}

Globally, air pollution is still a significant cause for concern, primarily due to its harmful impacts on both human well-being and the environment. According to \cite{EEA, EPA_PM}, high levels of particulate matter (PM) concentrations are concerning as they contribute significantly to respiratory and cardiovascular diseases hence need for monitoring and control strategies. 

\noindent Statistical models have been widely adopted to address this by detecting pollution spatial and temporal patterns, predicting PM concentrations exceedances and assessing the influential factors affecting their dynamics \citep{fasso2022spatiotemporal, otto2024spatiotemporal, krylova2022managing}. Building on these methods, our spatiotemporal autoregressive model aims at uncovering the spatial patterns and temporal trends in air quality monitoring stations in Bavaria, Germany. The hourly PM concentration data spanning from 2005 to 2020 was obtained from the German Environmental Agency (Umweltbundesamt, UBA).
\begin{figure}
\centering
\includegraphics[width=15cm]{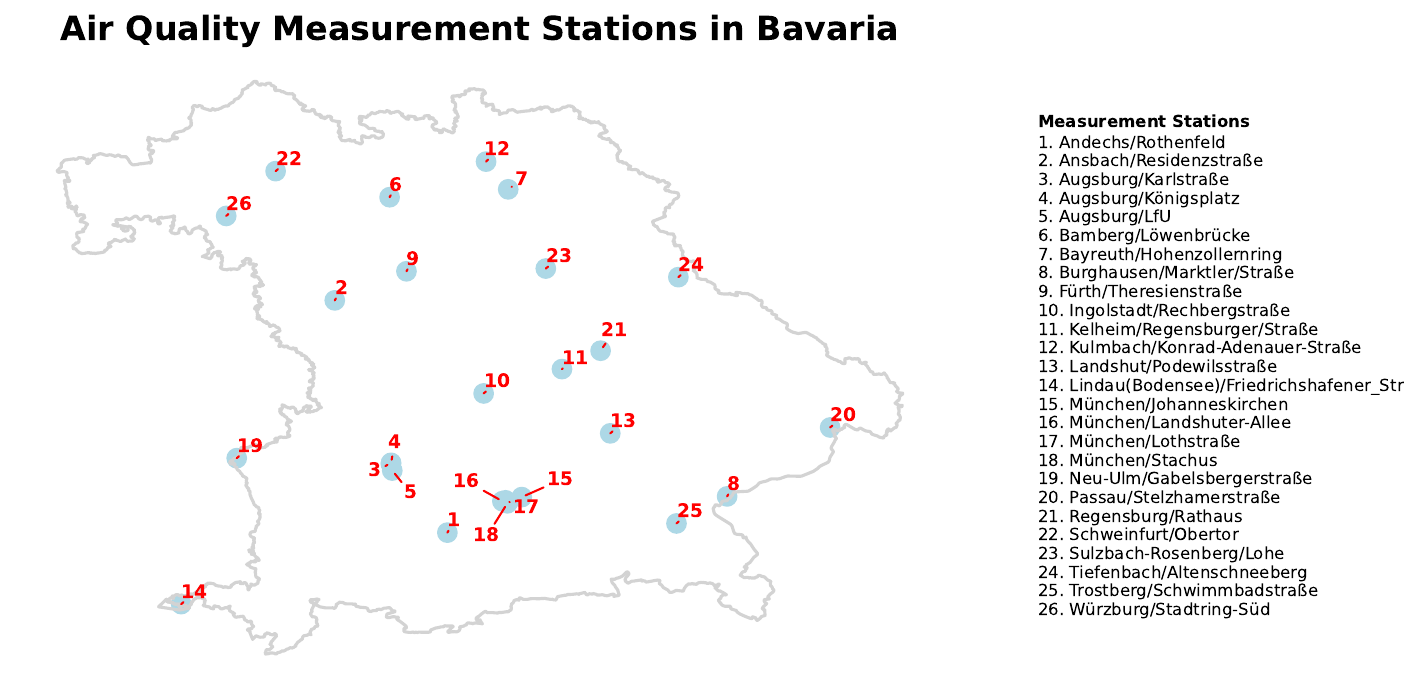}
\caption{Air quality measurement stations in Bavaria, Germany (some stations are in the same geographical zone e.g., see stations 3, 4 and 5, so the label was plotted outside.)}
\label{Data source}
\end{figure}
Data from 26 stations with over $90\%$ of measurements recorded was used, with missing values imputed using the backward-forward fill method. The average PM concentration across the stations was 21.80 $\mu g/m^3$, with a standard deviation of 18.52 $\mu g/m^3$ and a median of 18.00 $\mu g/m^3$. Despite a decreasing trend over the years of the time series (see Figure \ref{fig: PM_concetrations}), the spatial dependence of these concentrations is still an issue. The time series data shows a decreasing trend in particulate matter concentrations over the years with a significant seasonal structure (see Figure \ref{fig: predicted_residuals}). 
\begin{figure} % [h!]
    \centering
     \includegraphics[width=\linewidth]{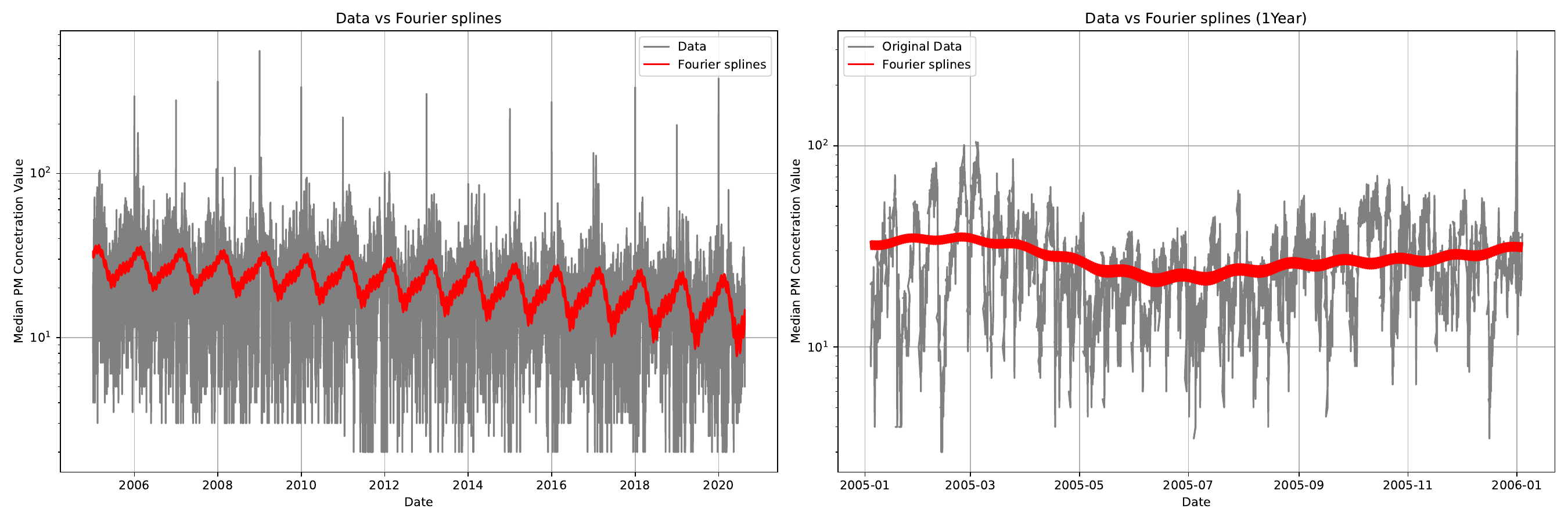}
    \caption{Figure showing median PM Concentration values and fourier splines across monitoring stations}
    \label{fig: PM_concetrations}
\end{figure}

\noindent We model these dependencies together with other parameters by using the spatiotemporal autoregressive model 
\begin{equation*}
\xvec{Y}_{t} = \xmat{X}_{t} \xvec{\beta} + \Phi_{1} \xvec{Y}_{t-1} + \xmat{W}\xvec{Y}_{t} + \xvec{\varepsilon}_{t},
\end{equation*}
and adding LASSO penalties to the estimation procedure. The model consists of $\xvec{Y}_{t}$, the level of PM concentration at a given hour, $t$, $\xmat{X}_t$ representing the design matrix that incorporates Fourier components of daily, monthly, biannually, and yearly frequencies. This captures the yearly cycles and seasonal variations in the PM concentrations over time across all monitoring stations and $\xvec{\beta}$'s are their coefficients. Moreover, $\xvec{Y}_{t-1}$ represents PM concentration levels in the previous hour and $\xmat{W}$ denotes the unknown spatial dependency for the air quality measurement stations in this study. The model also includes temporal coefficient for each measurement station, $\mathbf{\Phi}_{1}=\text{diag}(\phi_{1}(\xvec{s}_1), \ldots, \phi_{1}(\xvec{s}_n))$. We have used a temporal lag order of $P = 1$, which appears to be sufficient for our data, but generally, a suitable lag order could also be obtained by cross-validation across different candidate orders (e.g., identified by looking at the empirical autocorrelation functions (ACF)). The predicted PM concentrations show a reasonable fit, as displayed in Figure \ref{fig: predicted_residuals}  along with the residuals.
\begin{figure}
\centering
\includegraphics[width=17cm]{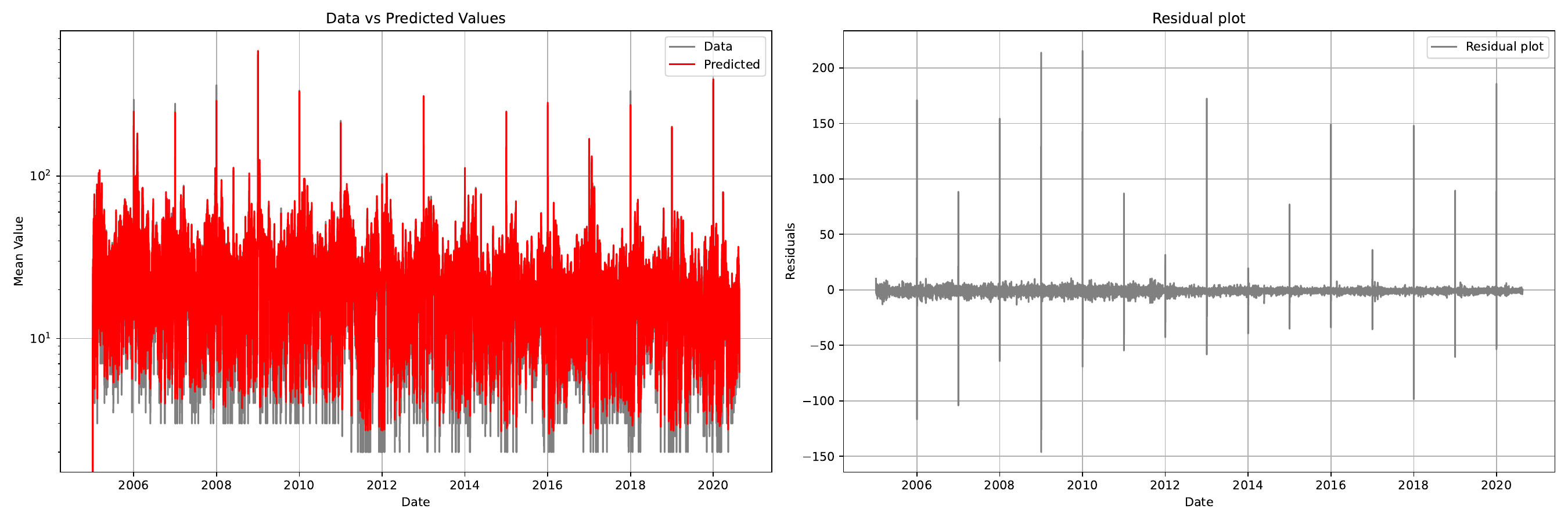}
\caption{Left: Graph of predicted data versus original data for the monitoring stations. Right: Plot of residuals}
\label{fig: predicted_residuals}
\end{figure}

\noindent  To do statistical inference, we re-estimate the model without penalisation, but only using the non-zero weights and covariates selected by the LASSO estimation. Then, the standard errors of these estimates are obtained empirically by computing the Hessian matrix of the non-penalised log-likelihood function evaluated at the estimated parameters. Specifically, the Hessian matrix $ H(\theta)$ is the matrix of second-order partial derivatives
$$H(\theta) = \frac{\partial^2 \ell(\theta) }{\partial \theta \, \partial \theta^\top},$$
where $\ell(\theta)$ is the non-penalized log-likelihood function and \( \theta \) is the vector of estimated parameters. Due to the complex structure of the model, we compute the Hessian \textit{numerically} using a central finite difference approximation. The covariance matrix of the parameter estimates is approximated by obtaining the inverse of the Hessian evaluated at the estimated parameters  $\xvec{\hat{\theta}}$:
$$\text{Cov}(\hat{\theta}) \approx H(\hat{\theta})^{-1}.$$

\noindent The standard errors for each estimated parameter $\hat{\theta}_i $ are then obtained from the diagonal elements of the covariance matrix:
$$
\text{SE}(\hat{\theta}_i) = \sqrt{\left( \text{Cov}(\hat{\theta}) \right)_{ii}}.$$

\noindent Moreover, the $ z $ values are computed as $z_i = \frac{\hat{\theta}_i}{\text{SE}(\hat{\theta}_i)}$ and the results reported in Table \ref{tab: Fourier estimates}.
\begin{table}[h!]
\centering
\caption{Fourier Parameter Estimates with Confidence Intervals}
\begin{tabular}{cccccc}\hline
\text{Parameter} & \text{Estimate} & \text{Standard Error} & \text{Z-Value} & \text{LCL} & \text{UCL} \\\hline
$\beta_1$ & -5.27496 & 0.030995 & -170.186 & -5.33531 & -5.21461 \\
$\beta_2$ & 2.87438  & 0.042798 & 67.162   & 2.79050  & 2.95827  \\
$\beta_3$ & -0.31387 & 0.022176 & -14.154  & -0.35733 & -0.27040 \\
$\beta_3$ & -0.19146 & 0.022525 & -8.500   & -0.23563 & -0.14729 \\
$\beta_4$ & 0.26611  & 0.022396 & 11.882   & 0.22222  & 0.31000  \\
$\beta_6$ & 0.09417  & 0.022565 & 4.173    & 0.04994  & 0.13840  \\
$\beta_7$ & -0.53290 & 0.023269 & -22.902  & -0.57851 & -0.48729 \\
$\beta_8$ & 0.67505  & 0.022316 & 30.250   & 0.63131  & 0.71880  \\
$\beta_9$ & -0.01737 & 0.022383 & -0.776   & -0.06125 & 0.02651  \\
$\beta_{10}$ & -0.26598 & 0.023455 & -11.340  & -0.31195 & -0.22001 \\\hline
\end{tabular}
\label{tab: Fourier estimates}
\end{table}

\noindent Furthermore, we have analysed the temporal autocorrelation of the residuals for each measurement station (see Figure \ref{fig: ACF station 26} and Figure \ref{fig:Autocorrelation plot} in the Appendix). The autocorrelation function (ACF) plots for the 26 air quality measurement stations depict a decay pattern in autocorrelation over increasing spatial distances. The ACF plots for the stations visually exhibit a random or near-zero pattern. We observe slight peaks of the ACF around 24 hours indicating a weak autoregressive structure. 
\begin{figure}
\centering
\includegraphics[width=17cm]{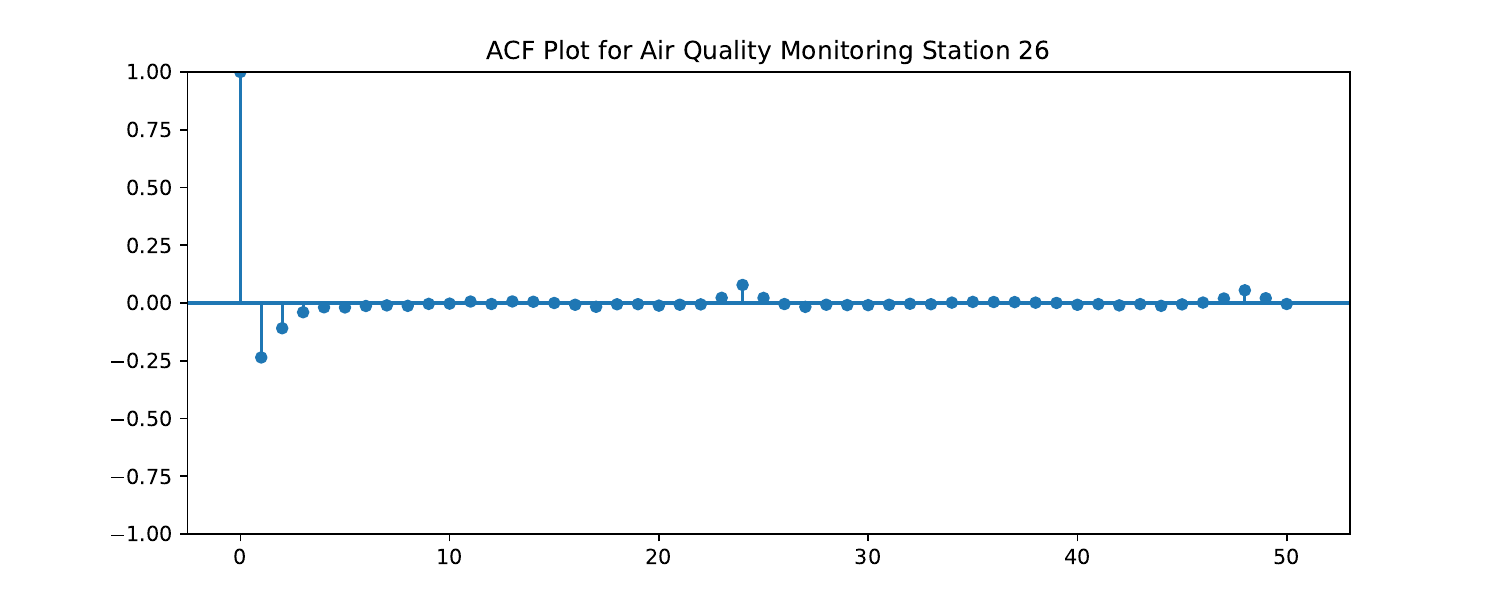}
\caption{Autocorrelation plot for ``Würzburg/Stadtring-Süd'' monitoring station}
\label{fig: ACF station 26}
\end{figure} 
Data analysis reveals high pollution peaks at the end of each year, likely linked to the widespread use of fireworks during New Year's celebrations, which temporarily increase pollution levels during the period. In future research, we could explore adding seasonal autoregressive structure. The model was then validated by comparing its Mean Squared Error (MSE), Akaike Information Criterion (AIC), and Bayesian Information Criterion (BIC) metrics with an Ordinary Least Squares regression approach (OLS), i.e., a model completely neglecting spatiotemporal dependence, and a Vector Auto-regression (VAR) model of order 1, which neglects the spatial dependence implied by $\xmat{W}$, and it proved to be more powerful with the lowest value for the three metrics as summarised in Table \ref{tab: tabcomparison}.  

\begin{table}
\centering
\begin{tabular}{cccc}\hline
\multicolumn{4}{c}{Model performance comparison} \\\cline{1-4}
 Measure & OLS & VAR  & Spatiotemporal model\\\hline
 MSE & 339.3519 & 327.7578  & 70.7844 \\
AIC & 30850000 & 20621861 & 15165678 \\
 BIC & 30850000 & 20630850 & 15174668 \\\hline
\end{tabular}
 \caption{Performance comparison of the spatiotemporal model with other models}
 \label{tab: tabcomparison}
 \end{table}

\subsection{Temporal dynamics of air quality in Bavaria}
The temporal dependencies over time for each station is estimated taking into consideration a one time point lag i.e., $\xmat{\Phi}_1 = \text{diag}(\phi_1(\xvec{s}_1), \ldots, \phi_1(\xvec{s}_n))$ for each of the air quality measurement stations. 

\begin{figure}
\centering
\includegraphics[width=17cm]{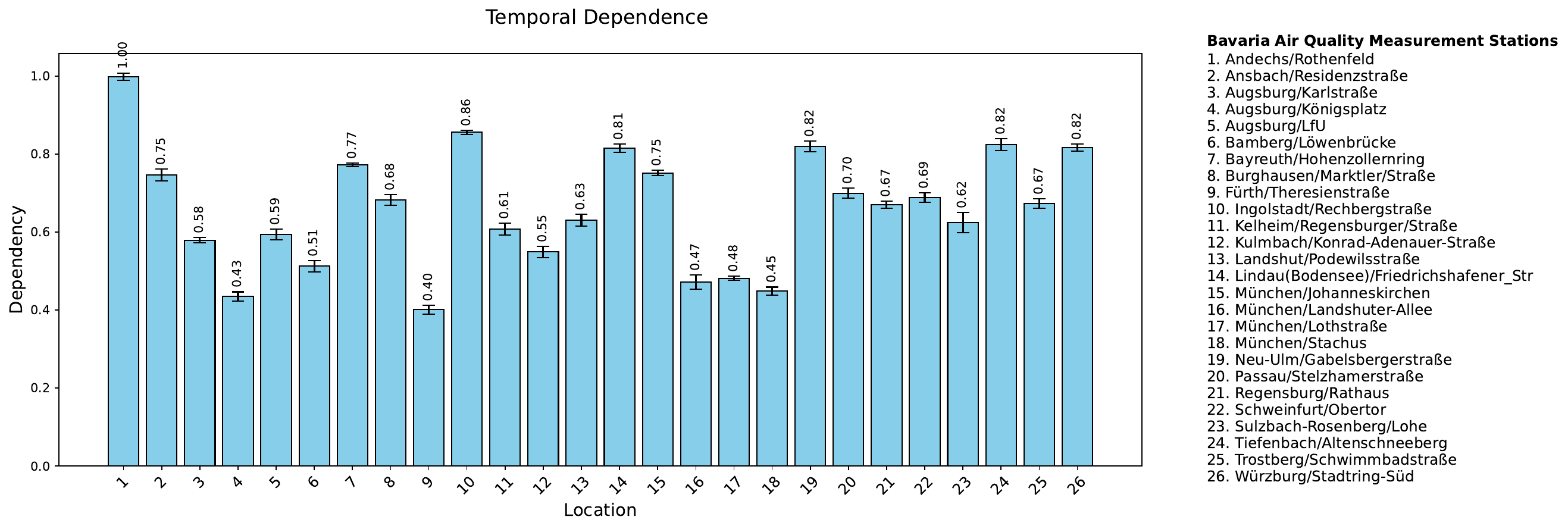}
\caption{Temporal dependency estimates with ±1.96 standard error bars (approximate $95\% $ confidence intervals) for each of the 26 air quality monitoring stations in Bavaria, based on a lag-one autoregressive model. }
\label{fig: temporal dependence}
\end{figure}

\noindent The temporal relationships between particulate matter (PM) concentrations across different air quality monitoring stations in Bavaria are not uniform, as demonstrated by the varying temporal coefficients observed in Figure \ref{fig: temporal dependence}. The bars represent the estimated strength of temporal autocorrelation at each station, with uncertainty derived from the Hessian of the log-likelihood function. This variability provides critical insights into the temporal dynamics of air quality in the region, with significant implications for environmental monitoring and policy-making. Notably, the ``M\"{u}nchen/Landshuter Allee" station exhibits a relatively moderate temporal coefficient of 0.47, implying moderate dependence of current PM concentration values on previous readings. This could suggest a stronger influence of short-term factors such as traffic or local industrial activities.

\noindent In contrast, the ``Andechs/Rothenfeld" and ``Ingolstadt/Rechbergstraße" monitoring stations' PM concentration values are highly influenced by previous concentration values at those stations with a coefficient of 0.99 and 0.86, respectively. Such patterns suggest persistent pollution sources or consistent environmental conditions, leading to more stable PM concentrations over time. The observed temporal dependencies highlight the importance of historical data in understanding and predicting PM concentrations, underscoring the need for continuous and comprehensive air quality monitoring. Given a decreasing trend over the years, we can observe that the PM concentration levels in the atmosphere are reducing significantly and, thus, locations with high air pollution need to benchmark and emulate what the state in Bavaria is implementing in mitigation efforts. 

\subsection{Spatial dependence across air quality monitoring stations}
The spatial dependencies between PM concentrations at different air monitoring stations show the distribution and variability of air pollution across Bavaria. As shown in Figure \ref{spatial and network}, the spatial relationships among the 26 monitoring stations show areas of both high and low correlation, underscoring the spatial heterogeneity of air pollution. Moreover, we summarise the spatial and temporal dependencies for each station in Table \ref{tab:locations}
\begin{table}[htbp]
    \centering
    \small
    \label{tab:locations}
    \scalebox{0.9}{
    \begin{tabular}{llllll}\hline
    Monitorng station & Latitude & Longitude & T.dependence & Sp.dependence & Location type \\\hline
    Andechs/Rothenfeld & 47.968753 & 11.220172 & High & Low & Rural \\
    Ansbach/Residenzstraße & 49.304889 & 10.572297 & High & Low & Urban traffic \\
    Augsburg/Karlstraße & 48.370237 & 10.896221 & High & Moderate & Urban traffic\\
    Augsburg/Königsplatz & 48.364590 & 10.895030 & Moderate & Moderate & Urban traffic\\
    Augsburg/LfU & 48.326013 & 10.903051 & High & Low & Sub urban \\
    Bamberg/Löwenbrücke & 49.898330 & 10.887686 & Moderate & Moderate & Urban\\
    Bayreuth/Hohenzollernring & 49.943637 & 11.570088 & High & Low & Urban \\
    Burghausen/Marktler/Straße & 48.177175 & 12.829314 & High & Low   & Sub urban\\
    Fürth/Theresienstraße & 49.472210 & 10.984705 & Moderate & Moderate & Urban traffic\\
    Ingolstadt/Rechbergstraße & 48.769444 & 11.428766 & High & Low & Urban traffic\\
    Kelheim/Regensburger/Straße & 48.909518 & 11.879245 & High & Low  & Urban traffic\\
    Kulmbach/Konrad-Adenauer& 50.103135 & 11.442591 & Moderate & Moderate & Urban \\
    Landshut/Podewilsstraße & 48.539801 & 12.157051 & High & Low  & Urban traffic\\
    Lindau(Bodensee) & 47.557473 & 9.688869 & High & Low  & Urban traffic\\
    München/Johanneskirchen & 48.17319 & 11.64804 & High & Low  & Sub urban\\
    München/Landshuter-Allee & 48.149606 & 11.536513 & Low & High & Urban traffic\\
    München/Lothstraße & 48.154534 & 11.554669 & Moderate & Moderate & Urban\\
    München/Stachus & 48.13732 & 11.56481 & Moderate & High & Urban traffic \\
    Neu-Ulm/Gabelsbergerstraße & 48.397079 & 10.008293 & High & Low  & Urban\\
    Passau/Stelzhamerstraße & 48.573629 & 13.422039 & High & Low  & Urban\\
    Regensburg/Rathaus & 49.015245 & 12.101557 & High & Low  & Urban traffic \\
    Schweinfurt/Obertor & 50.048397 & 10.232076 & High & Low  & Urban \\
    Sulzbach-Rosenberg/Lohe & 49.487983 & 11.786341 & High & Low  & Sub urban\\
    Tiefenbach/Altenschneeberg & 49.438464 & 12.548869 & High & Low  & Rural\\
    Trostberg/Schwimmbadstraße & 48.021657 & 12.538174 & High & Low  & Sub urban\\
    Würzburg/Stadtring-Süd & 49.790526 & 9.947654 & High & Low  & Urban traffic\\\hline
    \end{tabular}}
     \caption{Air quality monitoring stations characteristics data}
\end{table}

\begin{figure} %[h!]
\centering
\includegraphics[width=\textwidth]{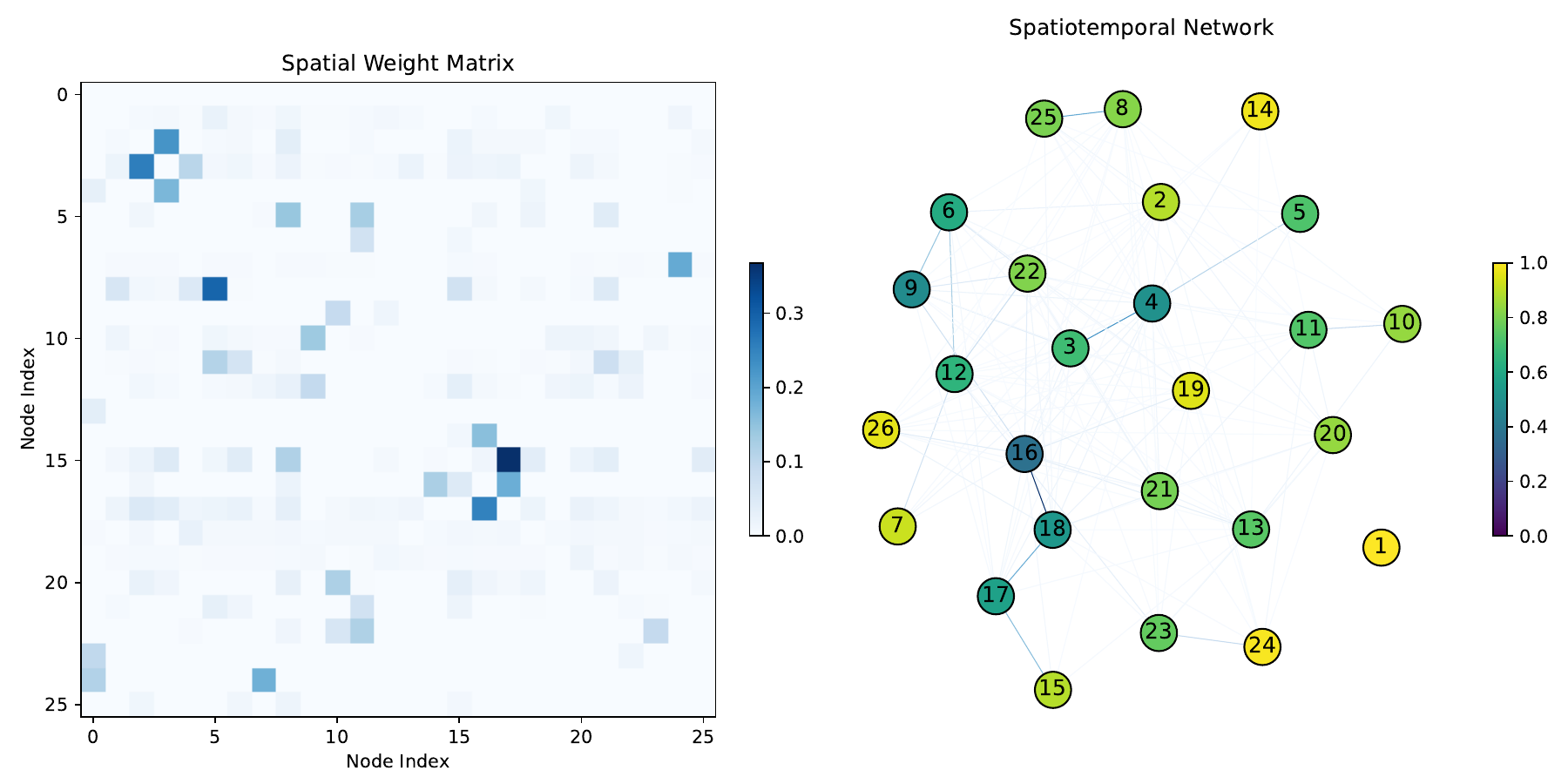}
\caption{ Estimated spatial dependence across monitoring stations. The left panel displays the estimated spatial weights matrix as a heatmap, where each cell represents the strength of spatial interaction between one station (row) and another (column). Darker shades indicate stronger dependencies. The right panel visualises the same spatial relationships as a network graph, where nodes represent monitoring stations and edges represent nonzero estimated spatial connections. Node colours indicate the strength of the corresponding station’s temporal dependency, with darker nodes showing higher temporal autocorrelation.} 
\label{spatial and network}
\end{figure}
\noindent When the spatial dependency is denoted as ``higher,'' it implies a strong correlation or relationship between the air quality readings of these air quality monitoring stations. For instance, Stations that are geographically close tend to exhibit higher spatial correlations due to similar exposure to prevailing winds and local sources of pollution, i.e stations within urban areas like Munich, tend to show higher spatial correlation which could be influenced by traffic and industrial activities hence have a more uniform impact within smaller, densely populated areas. In contrast, monitoring stations in rural or less densely populated areas, such as ``Andechs/Rothenfeld" and ``Sulzbach-Rosenberg/Lohe," exhibit lower spatial correlation with other stations. This indicates that PM concentrations in these areas are influenced by more localised or unique factors, such as specific agricultural practices or natural terrain features, contributing to spatial variability in pollution levels.

\noindent This relationship suggests that the stations share common local factors or pollution sources in their vicinity, significantly influencing the observed air quality. Spatial dependency may go beyond physical proximity. The high spatial dependence observed is not due to the stations being close geographically but because they share similar dynamics or characteristics of the area they are based. The stations based in urban traffic zones tend to have a similar higher dependency because their pollution factors are similar.  This could include similarities in local sources of pollution, environmental conditions, or other factors influencing air quality. A high level of spatial dependency has important implications, suggesting that interventions or responses triggered by data from one monitoring station may also be relevant and effective for the other station. The results are key in policy formulations in pollution control measures like traffic restrictions and industrial emissions controls within urban clusters.

\section{Conclusion}

In spatiotemporal modelling, accurate specification and/or estimation of the spatial weights matrix is important because it influences the parameter estimates obtained. In this paper, we have proposed an estimation procedure for this spatial weights matrix and other model parameters by adding LASSO penalties to maximum likelihood estimation approach. Through Monte Carlo simulations, the model evaluation metrics have shown reliability and accuracy in performance, specifically in the identification and distinction of zero and non-zero weights across different spatial units. 

\noindent This observed trend of increased estimation accuracy with an increasing number of time points is significant, particularly in the context of modelling PM (particulate matter) concentrations. The LASSO technique has shown accuracy in selecting the spatial weights by shrinking insignificant ones to zero, as evidenced in our results. With the rich temporal dimension of the PM concentrations data, the spatiotemporal autoregressive model accurately captures the spatial and temporal dependencies in these monitoring stations. The study shows that some stations with similar local dynamics tend to have similar patterns, hence showing a high spatial dependence, which helps explain how these PM concentrations differ across stations. Therefore, our paper not only advances the understanding of spatiotemporal dynamics but also offers tangible applications in addressing critical environmental issues.

\noindent Further research on time-varying spatial structure could be done, as some spatial dependencies may rely on the changing times Also, with the use of LASSO, future research could focus on the incorporation of machine learning and deep learning techniques due to their potential to capture hidden patterns in time-varying spatial structures with heightened precision and predictive capabilities. Further research could also focus on reducing the model estimation times, especially for heavily parameterised models.

\newpage
\bibliography{references} %\label{sec:bib}

\newpage
\section{Appendix}

\begin{figure}[h!]
  \centering
  \includegraphics[width=\textwidth]{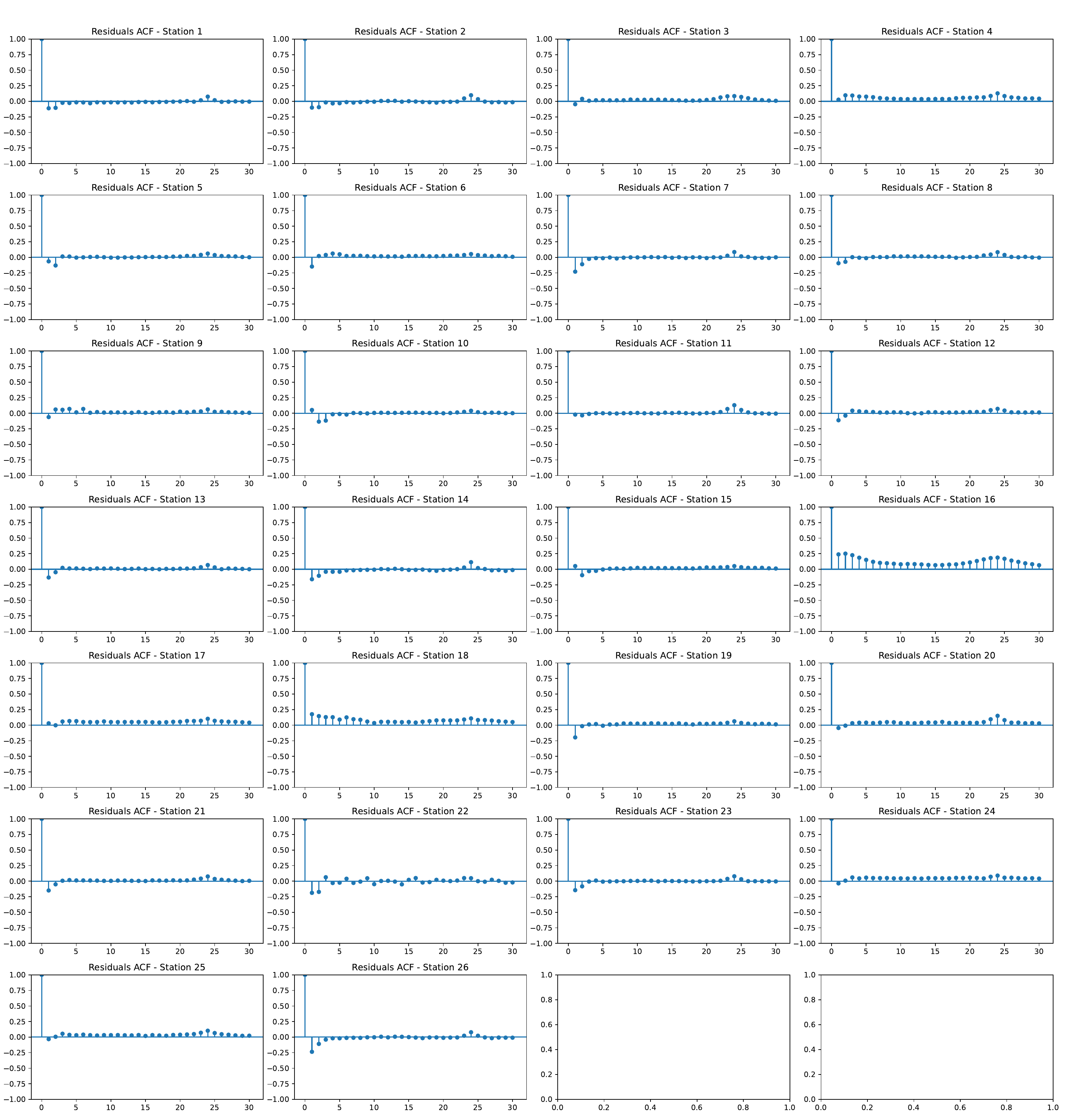}
  \caption{Autocorrelation Function (ACF) plots}\label{fig:Autocorrelation plot}
\end{figure}
\end{document}